\newtheorem{thm}{Theorem}
\begin{document}

\title{Approximating regions of attraction of a sparse polynomial differential system\footnote{The research was partly funded by R\'eseaux de Transport d'Electricit\'e (RTE).}} 

\author{Matteo Tacchi, Carmen Cardozo, Didier Henrion, Jean Bernard  Lasserre}

\maketitle

\begin{center}
\textit{\textbf{Matteo Tacchi}: R\'eseaux de Transport d'Electricit\'e, Immeuble WINDOW - 7C, Place du D\^ome, 92073 Paris La D\'efense, France;
LAAS-CNRS, 7 avenue du colonel Roche, 31031 Toulouse, France. (e-mail: {\tt tacchi@laas.fr})}

\textit{\textbf{Carmen Cardozo}: R\'eseaux de Transport d'Electricit\'e, Immeuble WINDOW - 7C, Place du D\^ome, 92073 Paris La D\'efense, France. (e-mail: {\tt carmen.cardozo@rte-france.com})}

\textit{\textbf{Didier Henrion}: LAAS-CNRS, 7 avenue du colonel Roche, 31031 Toulouse, France; Faculty of Electrical Engineering, Czech Technical University in Prague, Technick\'a 2, 16627 Prague, Czechia. (e-mail: {\tt henrion@laas.fr})}

\textit{\textbf{Jean Bernard Lasserre}: LAAS-CNRS, 7 avenue du colonel Roche, 31031 Toulouse, France; Toulouse Mathematics Institute, 118 route de Narbonne, 31062 Toulouse, France. (e-mail: {\tt lasserre@laas.fr})}
\end{center}

\begin{abstract}                
Motivated by stability analysis of large scale power systems, we describe how the Lasserre (moment - sums of squares, SOS) hierarchy can be used to generate outer approximations of the region of attraction (ROA) of sparse polynomial differential systems, at the price of solving linear matrix inequalities (LMI) of increasing size. We identify specific sparsity structures for which we can provide numerically certified outer approximations of the region of attraction in high dimension. For this purpose, we combine previous results on non-sparse ROA approximations with sparse semi-algebraic set volume computation.
\end{abstract}

Keywords: Stability analysis, Convex optimisation, Large-scale systems, Electric power systems.


\section{Introduction}

This paper describes a computational technique for generating outer approximations of finite time regions of attraction (ROA) of sparse polynomial ordinary differential equations (ODEs), with the purpose of assessing the stability of large scale power systems. These outer approximations contain all initial conditions for which the dynamical systems can operate safely. Indeed, power networks are usually modeled by an interconnection of weakly coupled nodes, while the dynamic behaviour of the system is mainly driven by generators, which are modeled by (closed-loop controlled) ordinary differential equations.

Most of the technical literature on stability analysis for power networks focuses on the construction of Lyapunov functions computed by nonconvex optimization, and more specifically a bilinear variant of polynomial sums of squares (SOS) optimization, as in e.g. \cite{AMP13, TMA+18, ISX+18}. An inner approximation of the infinite time ROA is then modeled as a sublevel set of the Lyapunov function, and various heuristics are used to enlarge this sublevel set as much as possible, see e.g. \cite{C11} and references therein. It can be enforced that the Lyapunov functions have the same sparsity structure as the system to be analyzed, see e.g. \cite{Z19} and references therein, but to our knowldge, it was never applied to ROA approximation. The work of \cite{KA15, KA17} is a first step towards the application of Lyapunov techniques to ROA estimation for interconnected systems.

In \cite{HK14} the authors derive an infinite-dimensional linear programming approach to finite time ROA computation, with a primal problem on measures and a dual problem on continuous functions. Computationally speaking, measures (resp. continuous functions) are discretized into moments (resp. polynomial sums of squares, SOS) of increasing degrees, resulting in a hierarchy of finite dimensional convex optimization problems, usually semidefinite programming (SDP) problems or linear matrix inequalities (LMI). This is an application of the so-called moment-SOS or Lasserre hierarchy, a mathematical technology that can be used to solve a variety of problems in applied mathematics and engineering, see \cite{L10, H13, L15}. In the context of ROA approximation, the hierarchy generates a family of outer approximations that become tighter as the degree increases. Non-sparse approximations converge in volume to the ROA when the degree tends to infinity, see \cite[Theorem 6]{HK14}. Inner approximations of the ROA can be constructed as well, see \cite{KHJ13}. The convergence proof relies on previous work on the application of the moment-SOS hierarchy for computing the volume of semi-algebraic sets, see \cite{HLS09}.

The main contribution of the current paper is to identify a sparsity structure that allows us to apply the moment-SOS hierarchy for sparse ROA approximation. We construct a hierarchy of outer approximations of increasing degree, though the sparsity is introduced at the price of the convergence proof that no longer holds. For this, we rely heavily on \cite{TWL+19} which focuses on the approximation of the volume of a sparse semi-algebraic set.

In the context of power systems stability analysis, our paper can be seen an extension to large-scale systems of results of \cite{JMT+19, OCP+19, OTH19}. It can be interpreted as well as a finite time dual approach to the standard Lyapunov approach of \cite{C11, AMP13, KA17, TMA+18, Z19}. We prefer however to see the Lyapunov approach as a dual to an infinite time occupation measure approach, in the sense that Lyapunov functions are obtained as a (dual Lagrangian) certificate of a property (stability) of the system's trajectories (modeled by occupation measures in a primal problem). The advantage of considering finite time ROA instead of standard Lyapunov ROA is the linearity of its characterization (which leads to solving convex LMIs instead of nonconvex bilinear matrix inequalities as in the Lyapunov framework), as well as the proof of convergence in volume (in the non-sparse case).

Section \ref{sec:pb} is dedicated to the problem statement, while in section \ref{sec:hiera} we introduce the framework that we use for our computations. Section \ref{sec:results} presents our main results, and our numerical experiments are gathered in section \ref{sec:num}. Finally, we give our conclusions and perspectives in section \ref{sec:ccl}.

\section{Problem Statement} \label{sec:pb}

Let $N \in \mathbb{N}$. We consider the following system of sparsely coupled polynomial ODEs:
\begin{equation} \label{eq:pode}
\begin{array}{llr}
\dot{x}_i = f_i(x_i,x_{i+1}) \quad & x_i \in X_i & \quad i = 1 \ldots N-1 \\
\dot{x}_N = f_N(x_{N-1},x_N) \quad & x_N \in X_N &
\end{array}
\end{equation}
where $X_1,\dots,X_N$ are finite dimensional compact semialgebraic sets and $f_1,\ldots,f_N$ are polynomial maps. We define
$X := X_1 \times \ldots \times X_N$, $f : = (f_1,\ldots,f_N)$
and $n := \dim X$. Note that $X$ is also a compact semialgebraic set.

Given a finite time horizon $T>0$ and a compact semialgebraic target set $X^T := X^T_1\times\ldots\times X^T_N$,
we aim at computing outer approximations of the finite time region of attraction (ROA), defined as
\begin{equation} \label{def:ais}
X^0(T,X^T) := \left\{x^0 \in \mathbb{R}^n : \begin{array}{l}
 x(t|x^0) \in X \ \forall t \in [0,T]\\
x(T|x^0) \in X^T
\end{array} \right\}
\end{equation}
where $x(t|x^0)$ denotes the value at time $t$ of the unique solution to \eqref{eq:pode} with initial condition $x^0$. In the following, the dependance of $X^0$ in $T$ and $X^T$ will be implicit.
\section{Framework} \label{sec:hiera}

\subsection{Infinite dimensional optimization}

It is shown in \cite{HK14} that $X^0$ can be obtained as the support of the measure $\mu^0$ solution to the infinite dimensional linear programming (LP) problem
\begin{equation} \label{lp:pdense}
\begin{array}{rcl}
p^\ast := &  \max & {\mu^0(X)} \\
& \mathrm{s.t.} & \lambda^n - \mu^0 \in M(X)_+ \\
& & \partial_t \mu + \text{div}(f \ \mu) + \delta_T\mu^T = \delta_0\mu^0
\end{array}
\end{equation}
where the unknowns are measures $\mu^0 \in M(X)_+$, $\mu \in M([0,T]\times X)_+$, $\mu^T \in M(X^T)_+$
and $M(X)_+$ denotes the cone of Borel measures on $X$, $\lambda^n$ is the $n$ dimensional Lebesgue measure such that $\int \psi \ d\lambda^n = \int \psi(x) \ dx$ for any measurable $\psi : \mathbb{R}^n \to \mathbb{R}$ and $\delta_t$ denotes the Dirac measure in $t$ such that $\int \phi \ \delta_t = \phi(t)$ for any continuous $\phi : [0,T] \to \mathbb{R}$.

The last constraint is the so-called Liouville transport partial differential equation (PDE) whose characteristics are exactly the trajectories associated to solutions of the ODE \eqref{eq:pode}. \cite{HK14} proved that the solution of \eqref{lp:pdense} is given by
\begin{equation} \label{sol:dense}
\begin{array}{ll}
\mu^0(A) & = \ \lambda(A \cap X^0)\\
\mu(I\times A) & = \ \int_{I\cap [0,T]} \int_X \chi_{A}(x(t|x^0)) \ \mu^0(dx^0) \ dt\\
\mu^T(A) & = \ \int \chi_A(x(T|x^0)) \ \mu^0(dx^0)
\end{array}
\end{equation}
for any Borel sets $A \subset X$ and $I \subset [0,T]$, where $\chi$ denotes the boolean indicator function. These measures are respectively called the \textit{initial measure}, the \textit{occupation measure} and the \textit{terminal measure}.

Problem \eqref{lp:pdense} has a dual formulation on functions that can be formulated as follows:
\begin{equation} \label{lp:ddense}
\begin{array}{rcl}
d^\ast := & \inf & \int_X w \ d\lambda^n \\
& \mathrm{s.t.} & w \geq v(0,\cdot) + 1\\
&& \partial_t v  + \nabla v  \cdot f  \leq 0\\
&& v(\cdot,T) \geq 0  \:\mathrm{on}\:X^T 
\end{array}
\end{equation}
where the unknowns are functions $v \in C^1([0,T]\times X)$, $w \in C^0(X)$
and $C^k(X)$ denotes the vector space of continuous and $k$ times continuously differentiable functions on $X$, and $C^k(X)_+ := \{\psi \in C^k(X) : \forall x \in X , \psi(x) \geq 0\}$ is its cone of nonnegative elements.

It was shown in \cite{HK14} that any $v$ feasible for \eqref{lp:ddense} is such that $X^0_v := \{x^0 \in X : v(0,x^0) \geq 0\} \supset X^0$ using the fact that $v$ decreases along trajectories (similarly to Lyapunov functions). In addition to that, Henrion and Korda also proved that $p^\ast = d^\ast$ (this is the strong duality property), from which one can deduce the existence of a sequence of feasible polynomials $(v_k,w_k)_{k\in\mathbb{N}}$ of increasing degrees such that $X^0_{v_k}$ converges to $X^0$ in volume, \textit{i.e.}
$$ \lambda^n\left(X^0_{v_k} \setminus X^0\right) \rightarrow 0 \:\:\mathrm{when}\:\:k\to\infty.$$
Such a sequence of polynomials is computed through convex optimization using the Moment-SOS hierarchy.

\subsection{The Moment-SOS hierarchy}

The Moment-SOS hierarchy is a primal-dual hierarchy of convex programs that grow in size and whose solutions give certified approximations to the solutions of infinite dimensional LPs on measures and functions. For the sake of simplicity and since we do not resort to the Moment side of the hierarchy, we will only present the SOS part.

A polynomial $\sigma \in \mathbb{R}[x]$ is called a \textit{sum of squares (SOS)} if it can be written $\sigma = p_1^2 + \ldots + p_k^2$ for some $k \in \mathbb{N}$, $p_i \in \mathbb{R}[x]$. Sums of squares are related to positive and nonnegative polynomials through Putinar's Positivstellensatz (P-satz):

\begin{thm}[\cite{P93}, Theorem 1.3]\label{thm:putinar}
Given polynomials $g_1,\dots,g_N \in \mathbb{R}[x]$ such that $g_1(x) = R^2 - |x|^2$ for some $R > 0$, let $X := \{x \in \mathbb{R}^n : g_1(x) \geq 0 , \dots , g_N(x) \geq 0\}$, $\Sigma(X) := \{\sigma_0 + \sigma_1g_1 + \ldots + \sigma_Ng_N, \:\sigma_0, \sigma_1, \ldots \sigma_N \text{ SOS}\}$, $P(X) := \{p \in \mathbb{R}[x] : p \geq 0 \:\mathrm{on}\:X\}$ and $P^\ast(X) := \{p \in \mathbb{R}[x] : p  > 0 \:\mathrm{on}\:X\}$. Then, one has
\begin{equation}
P^\ast(X) \subset \Sigma(X) \subset P(X).
\end{equation}
\end{thm}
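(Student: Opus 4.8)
The plan is to prove the two inclusions separately. The inclusion $\Sigma(X)\subseteq P(X)$ is immediate: if $p=\sigma_0+\sigma_1 g_1+\cdots+\sigma_N g_N$ with every $\sigma_i$ a sum of squares, then for every $x\in X$ one has $\sigma_i(x)\ge 0$ and $g_i(x)\ge 0$, hence $p(x)\ge 0$, so $p\in P(X)$. All the content is therefore in $P^\ast(X)\subseteq\Sigma(X)$: every $f\in\mathbb{R}[x]$ with $f>0$ on $X$ must admit a representation $f=\sigma_0+\sum_{i=1}^N\sigma_i g_i$ with $\sigma_i$ SOS. The first thing I would record is that the quadratic module $M:=\Sigma(X)$ is \emph{Archimedean}. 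Since $|x|^2=\sum_i x_i^2$ and every nonnegative constant are themselves sums of squares, the hypothesis $g_1=R^2-|x|^2\in M$ propagates, by telescoping identities and multiplication by SOS multipliers, to $C_\alpha\pm x^\alpha\in M$ for every monomial $x^\alpha$ and a suitable constant $C_\alpha$, hence to: for every $p\in\mathbb{R}[x]$ there is $C>0$ with $C\pm p\in M$. This is the only place the ball constraint $g_1$ enters.

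Next I would argue by contradiction. Suppose $f>0$ on $X$ but $f\notin M$. The set $M$ is a convex cone in the infinite-dimensional vector space $\mathbb{R}[x]$, and the Archimedean property makes $1$ an algebraic interior (core) point of $M$. A Hahn--Banach separation argument then produces a nonzero linear functional $L:\mathbb{R}[x]\to\mathbb{R}$ with $L\ge 0$ on $M$ and $L(f)\le 0$. Because the bilinear form $(p,q)\mapsto L(pq)$ is positive semidefinite (as $p^2\in M$), Cauchy--Schwarz gives $L(p)^2\le L(p^2)\,L(1)$, so $L(1)=0$ would force $L\equiv 0$; since also $1\in M$ gives $L(1)\ge 0$, we may normalise $L(1)=1$.

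The crux is to realise $L$ as integration against a probability measure supported on $X$. The Archimedean bounds $C-x_i^2\in M$ yield $L(x_i^2 p^2)\le C\,L(p^2)$ for all $p$, so "multiplication by $x_i$" is a bounded self-adjoint operator $\hat X_i$ on the Hilbert space obtained by completing $\mathbb{R}[x]$ modulo the kernel of the form $L(pq)$; the $\hat X_i$ commute, so the spectral theorem for the tuple $(\hat X_1,\dots,\hat X_n)$ furnishes a projection-valued measure, supported in a fixed ball, whose scalar measure $\nu$ against the cyclic vector $1$ is a probability measure satisfying $L(p)=\int p\,d\nu$ for all $p\in\mathbb{R}[x]$. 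Finally, $L(g_j p^2)\ge 0$ for all $p$ forces the operator $g_j(\hat X_1,\dots,\hat X_n)\succeq 0$, which in turn forces $\nu(\{g_j<0\})=0$ for each $j$; hence $\operatorname{supp}\nu\subseteq\bigcap_j\{g_j\ge 0\}=X$. Then $0\ge L(f)=\int f\,d\nu>0$, since $f>0$ on $X$ and $\nu$ is a nonzero nonnegative measure supported on $X$ — a contradiction. Therefore $f\in M=\Sigma(X)$, which completes the proof.

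I expect the main obstacle to be exactly this last step: the passage from "$L$ is nonnegative on the Archimedean quadratic module $M$ and $L(1)=1$" to "$L$ is integration against a probability measure supported on $X$". This requires the GNS-type completion, the boundedness of the multiplication operators, the multivariate spectral theorem, and the argument that operator-positivity of each $g_j(\hat X)$ confines the spectral measure to $\{g_j\ge 0\}$. An alternative would be to derive the statement from Schm\"udgen's Positivstellensatz for the preordering generated by the $g_i$, together with an Archimedean trick converting preordering representations into quadratic-module ones, but the self-contained route sketched above keeps the role of the ball constraint $g_1$ most transparent.
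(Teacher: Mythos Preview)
The paper does not actually prove this theorem: it is stated as a citation of Putinar's 1993 result and used as a black box, with no argument given. So there is no ``paper's own proof'' to compare against.

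Your sketch is a correct and essentially standard proof of Putinar's Positivstellensatz. The easy inclusion $\Sigma(X)\subset P(X)$ is handled correctly. For the hard inclusion, your outline follows the now-classical route: Archimedeanity of the quadratic module from the explicit ball constraint $g_1=R^2-|x|^2$; Hahn--Banach separation using that $1$ lies in the core of $M$; normalisation $L(1)=1$ via Cauchy--Schwarz; the GNS completion with bounded commuting self-adjoint multiplication operators $\hat X_i$; the joint spectral measure $\nu$ representing $L$; and the support constraint $\operatorname{supp}\nu\subset X$ from operator-positivity of each $g_j(\hat X)$. All of these steps are sound as stated. The one place a careful write-up would expand is the separation step: one should check that $M$ is a proper cone (i.e.\ $-1\notin M$, which follows since $M\subset P(X)$ and $X\neq\emptyset$) so that the separating functional is genuinely nonzero, and that the core/algebraic-interior version of Hahn--Banach applies in this purely algebraic setting. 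Your closing remark that the alternative route via Schm\"udgen plus an Archimedean reduction is also available is accurate.
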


Thus, using the Stone-Weierstrass Theorem (stating that any continuous function can be approximated with polynomials on a compact set) and Theorem \ref{thm:putinar}, one can look for feasible plans to \eqref{lp:ddense} by replacing all inequality constraints with SOS constraints. This way, one obtains a hierarchy of SOS problems indexed with the degree of the unknown polynomials. The infinite dimensional LP problem is thus turned into a hierarchy of finite dimensional SDP problems (\textit{i.e.} convex optimization problems with LMI constraints), using a last result:

\begin{thm}[see e.g. \cite{L10}, Proposition 2.1]\label{thm:sos}
The constraint that a polynomial is SOS is a linear matrix inequality (LMI).
\end{thm}

This work was extended to inner approximations of the ROA in \cite{KHJ13}, and to inner approximations of the maximum positively invariant set in \cite{OTH19}. It has the advantage to reduce the estimation of finite time ROA to a hierarchy of convex optimization programs, while Lyapunov-based methods for ROA estimation rely on nonconvex bilinear matrix inequalities, see e.g. \cite{ISX+18}.

The issue that one often encounters is that the moment-SOS hierarchy resorts to semidefinite programming (SDP) which does not scale well (\cite{OCP+19} pushed to dimension 5 state space). In order to tackle higher dimensional problems, one has to exploit additional properties such as sparsity or symmetries.



\section{Main Results} \label{sec:results}

\subsection{A sparse infinite dimensional program}

With the application to electrical power systems in mind, we focus on exploiting the network-like structure in our computations. A power network model has the particularity that not all the variables directly interact in the equations. Especially, nodes that are geographically far from each other are not connected together in the dynamics of the system. This corresponds to a sparse structure, which motivates this work.

Following the inspiration given by both \cite{TWL+19} and \cite{Z19}, we derive an LP problem that can be split into small dimensional subproblems, and thus is a lot more scalable than the dense formulation \eqref{lp:ddense}. To that end, we introduce the number of cliques $K := N-1$ as well as the compact semialgebraic sets $Y_i := X_i \times X_{i+1}$, $Y^T_i := X^T_i \times X^T_{i+1}$ and $n_i := \dim Y_i$, $i=1,\ldots,K$. Then, one can write the following LP
on functions:

%

\begin{subequations} \label{lp:dsparse}
\begin{eqnarray}
d_s^\ast := & \inf & \int_{Y_1} w_1 \ d\lambda^{n_1} + \ldots + \int_{Y_K} w_K \ d\lambda^{n_K} \label{obj:dsparse}\\
&& w_j \geq v_{j1}(0,\cdot) + v_{j2}(0,\cdot) + 1\label{cst:inic}\\
&& w_K \geq v_K(0,\cdot) + 1 \label{cst:inif}\\
&& v_{j1}(T,\cdot) + v_{j2}(T,\cdot) \geq 0 \:\mathrm{on}\:Y^T_j\label{cst:finic} \\
&& v_K(T,\cdot) \geq 0 \:\mathrm{on}\:Y^T_K\label{cst:finif} \\
&& u_j + \partial_{x_{j+1}}v_{j2}\cdot f_{j+1} \leq 0 \label{cst:trns} \\
&& \partial_t v_{j1} + \partial_t v_{j2} + \partial_{x_j}v_{j1}\cdot f_j \leq u_j \label{cst:lyapc} \\
&&  \partial_t v_K + \partial_{\left(\substack{x_K\\x_N}\right)}v_K \cdot \left(\substack{f_K\\f_N}\right) \leq 0 \label{cst:lyapf}
\end{eqnarray}
\end{subequations}
where the unknowns are functions $w_i \in C(Y_i)_+$, $u_j \in C^0([0,T]\times X_{j+1})$, $v_{j1} \in C^1([0,T]\times X_j)$, $v_{j2} \in C^1([0,T]\times X_{j+1})$, $v_K \in C^1([0,T]\times Y_K)$.
Here the idea is to split the decision variables $v$ and $w$ of problem \eqref{lp:ddense} and distribute them along the components of our sparse system. The decision variables $u_j$ are added to take into account the interconnexion between the components. By doing so, we end up with inequality constraints involving only the variables of one of the considered subsystems at a time, which drastically reduces the dimension of the decision space in the SOS hierarchy.

Our main result is the numerical certification that can be stated as follows:

\begin{thm} \label{thm:outer}
Let $(w_j,w_K,u_j,v_{j1},v_{j2},v_K)_{j=1,\dots,K-1}$ be a feasible plan for problem \eqref{lp:dsparse}, and consider the set
\begin{equation}
X^0_{(v_j)_j} := \left\{x \in \mathbb{R}^n : \begin{array}{l} (v_{j1}(0,x_j) + v_{j2}(0,x_{j+1}))_j \geq 0 \\ v_K(0,x_K,x_N) \geq 0\end{array}\right\}.
\end{equation}
Then, one has $X^0 \subset X^0_{(v_j)_j}$.
\end{thm}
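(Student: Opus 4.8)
\emph{Proof proposal.} The plan is to adapt, clique by clique, the ``value function decreasing along trajectories'' reasoning used in \cite{HK14} for the dense dual \eqref{lp:ddense}. Fix an arbitrary $x^0\in X^0$. By \eqref{def:ais}, the solution of \eqref{eq:pode} satisfies $x(t\,|\,x^0)\in X=X_1\times\cdots\times X_N$ for all $t\in[0,T]$ and $x(T\,|\,x^0)\in X^T$; in particular $x^0\in X$. Since $X^0_{(v_j)_j}$ is the intersection of the $K$ sign conditions $v_{j1}(0,x_j)+v_{j2}(0,x_{j+1})\ge 0$, $j=1,\dots,K-1$, and $v_K(0,x_K,x_N)\ge 0$, it suffices to verify each of them at the single point $x^0$.

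For this I would introduce, along the trajectory, the scalar functions $\phi_j(t):=v_{j1}(t,x_j(t\,|\,x^0))+v_{j2}(t,x_{j+1}(t\,|\,x^0))$ for $j=1,\dots,K-1$ and $\phi_K(t):=v_K(t,x_K(t\,|\,x^0),x_N(t\,|\,x^0))$; since the $v$'s are $C^1$ and the ODE solution is $C^1$ in $t$, each $\phi_j$ is $C^1$ on $[0,T]$. Differentiating and inserting the dynamics $\dot x_j=f_j(x_j,x_{j+1})$, $\dot x_{j+1}=f_{j+1}(x_{j+1},x_{j+2})$ from \eqref{eq:pode}, one gets for $j\le K-1$
\[
\dot\phi_j(t)=\bigl[\partial_t v_{j1}+\partial_t v_{j2}+\nabla_{x_j}v_{j1}\cdot f_j\bigr]_{(t,\,x_j(t),\,x_{j+1}(t))}+\bigl[\nabla_{x_{j+1}}v_{j2}\cdot f_{j+1}\bigr]_{(t,\,x_{j+1}(t),\,x_{j+2}(t))} ,
\]
where the first bracket is evaluated at a point of $[0,T]\times Y_j$ and the second at a point of $[0,T]\times Y_{j+1}$ (both memberships come from $x(t\,|\,x^0)\in X$). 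Constraint \eqref{cst:lyapc} bounds the first bracket by $u_j(t,x_{j+1}(t))$, and constraint \eqref{cst:trns} bounds the coupling term $\nabla_{x_{j+1}}v_{j2}\cdot f_{j+1}$ by $-u_j(t,x_{j+1}(t))$; adding, the auxiliary function $u_j$ --- a function of $(t,x_{j+1})$ alone --- cancels and $\dot\phi_j\le 0$ on $[0,T]$. The sole purpose of $u_j$ is precisely to relay this coupling term, which involves the ``foreign'' variable $x_{j+2}$, from clique $j$ to clique $j+1$, so that no individual constraint ever leaves a single clique. For $j=K$ the clique is dynamically closed, $\dot x_K=f_K(x_K,x_N)$ and $\dot x_N=f_N(x_K,x_N)$, so differentiating $\phi_K$ and using \eqref{cst:lyapf} gives $\dot\phi_K\le 0$ directly, with no coupling to handle.

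It then remains to close the argument as in the dense case. Monotonicity gives $\phi_j(0)\ge\phi_j(T)$ for all $j$. At time $T$, $x(T\,|\,x^0)\in X^T$ forces $(x_j(T),x_{j+1}(T))\in Y^T_j$ for $j\le K-1$ and $(x_K(T),x_N(T))\in Y^T_K$, so the terminal constraints \eqref{cst:finic} and \eqref{cst:finif} yield $\phi_j(T)\ge 0$; hence $\phi_j(0)\ge 0$ for every $j$, i.e.\ $v_{j1}(0,x^0_j)+v_{j2}(0,x^0_{j+1})\ge 0$ for $j=1,\dots,K-1$ and $v_K(0,x^0_K,x^0_N)\ge 0$, which is exactly $x^0\in X^0_{(v_j)_j}$. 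As $x^0\in X^0$ was arbitrary, $X^0\subset X^0_{(v_j)_j}$. Note that neither the objective \eqref{obj:dsparse} nor the constraints \eqref{cst:inic}--\eqref{cst:inif} on the $w_i$ enter this argument: those control only the volume of the approximation, not the inclusion.

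The conceptual crux --- the one spot where this departs from \cite{HK14} --- is recognizing that adding \eqref{cst:lyapc} to \eqref{cst:trns} cancels $u_j$ and thereby makes each clique-wise function non-increasing along trajectories, even though no single clique ever sees the whole vector field. Beyond that the work is bookkeeping: matching each $f_i$ to the right component along the trajectory, checking that every pointwise inequality is read at a point of the set ($Y_j$, $Y_{j+1}$, $Y^T_j$ or $Y^T_K$) on which it is imposed --- which is exactly where the hypothesis $x^0\in X^0$ is used --- and the routine observation that $\phi_j\in C^1$, so $\dot\phi_j\le 0$ really does give $\phi_j$ non-increasing on the closed interval $[0,T]$.
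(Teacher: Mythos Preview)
Your argument is correct and is essentially the paper's own proof: both show that, for each clique, the combination $v_{j1}+v_{j2}$ (resp.\ $v_K$) is non-increasing along trajectories by pairing \eqref{cst:lyapc} with \eqref{cst:trns} so that $u_j$ cancels (resp.\ using \eqref{cst:lyapf} directly), and then invoke the terminal constraints \eqref{cst:finic}--\eqref{cst:finif}. The only difference is presentational: you package the two pieces into a single $\phi_j$ and show $\dot\phi_j\le 0$ pointwise, whereas the paper treats $v_{j1}$ and $v_{j2}$ separately via the fundamental theorem of calculus and adds the resulting integral inequalities.
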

\begin{proof}
Let $x^0 \in X^0$. Then, by definition, $x(T|x^0) \in X^T$, and according to constraint \eqref{cst:finic} one has
$$v_{j1}(T,x_j(T|x^0)) + v_{j2}(T,x_{j+1}(T|x^0)) \geq 0.$$
Moreover we know that
\begin{eqnarray*}
v_{j1}(T,x_j(T|x^0)) - v_{j1}(0,x_j^0) &=& \hspace*{-1.3pt}\int_0^T \frac{d}{dt}(v_{j1}(t,x_j(t|x^0))) \ dt \\
&=& \int_0^T \partial_tv_{j1}(t,x_j(t|x^0)) + \\
&& \hspace*{-9em} \partial_{x_j}v_{j1}(t,x_j(t|x^0))\cdot f_j(x_j(t|x^0),x_{j+1}(t|x^0)) \ dt \\
& \stackrel{\eqref{cst:lyapc}}{\leq} & \int_0^T u_j(t,x_{j+1}(t|x^0)) - \\
&& \partial_tv_{j2}(t,x_{j+1}(t|x^0)) \ dt.
\end{eqnarray*}
The same reasoning on $v_{j2}$ yields
$$ \begin{array}{l}
v_{j2}(T,x_{j+1}T|x^0)) - v_{j2}(0,x_{j+1}^0) \stackrel{\eqref{cst:trns}}{\leq} \\
\hspace*{5em}\int_0^T \partial_tv_{j2}(t,x_{j+1}(t|x^0)) - u_j(t,x_{j+1}(t|x^0)) \ dt.
\end{array} $$
Finally, adding both inequalities, one obtains
\begin{eqnarray*}
0 &\leq & v_{j1}(T,x_j(T|x^0)) + v_{j2}(T,x_{j+1}(T|x^0)) \\
&\leq & v_{j1}(0,x_j^0) + v_{j2}(0,x_{j+1}^0).
\end{eqnarray*}

Since $v_K$ is nonnegative at time $T$ in $Y^T_K$ in virtue of \eqref{cst:finif}, and decreasing along trajectories in virtue of \eqref{cst:lyapf}, the last required inequality is also satisfied. Thus, $x^0 \in X^0_{(v_j)_j}$.
\end{proof}

With this formulation, we design a method to compute sparse outer approximations of the ROA, using only convex semidefinite programming, while all existing methods resort only to nonconvex optimization, namely bilinear matrix inequalities. However, the constraint that the approximation should be sparse is a significant restriction that prevents us from proving convergence to the actual ROA. Indeed, with the following elementary exemple we show that in the case of sparse dynamics and target set, the ROA has no reason to be sparse.

\subsection{Is the ROA sparse ?} \label{sec:nosparse}

Consider the simple case where $N = 3$ and the dynamics are:

\begin{subequations}\label{ode:bicyl}
\begin{eqnarray}
\dot{x}_1 &=& (x_1^2 + x_2^2 - 0.25)x_1 \\
\dot{x}_2 &=& (x_2^2 + x_3^2 - 0.25)x_2 \\
\dot{x}_3 &=& (x_2^2 + x_3^2 - 0.25)x_3.
\end{eqnarray}
\end{subequations}

Here, it is clear that the bicylinder $B := \{x\in\mathbb{R}^3 : x_1^2 + x_2^2 \leq 0.25, \  x_2^2 + x_3^2 \leq 0.25\}$ is contained in the infinite time ROA of the equilibrium point $0$.

However, this ROA is strictly larger than our sparsely defined $B$, and it intricates all variables, which means that it cannot be sparsely described.

To illustrate this fact, we plotted the evolution of $x_1(t|x^0)$ with different initial conditions $x^0$ outside the $B$ (see Figures \ref{fig:stable} and \ref{fig:unstable}). In the three cases, $(x_2^0,x_3^0)$ is in the disk of radius $0.5$ such that $x_2(t|x^0)$ and $x_3(t|x^0)$ go to $0$ quickly.

\begin{figure}[h]
\begin{center}
\includegraphics[scale=0.5]{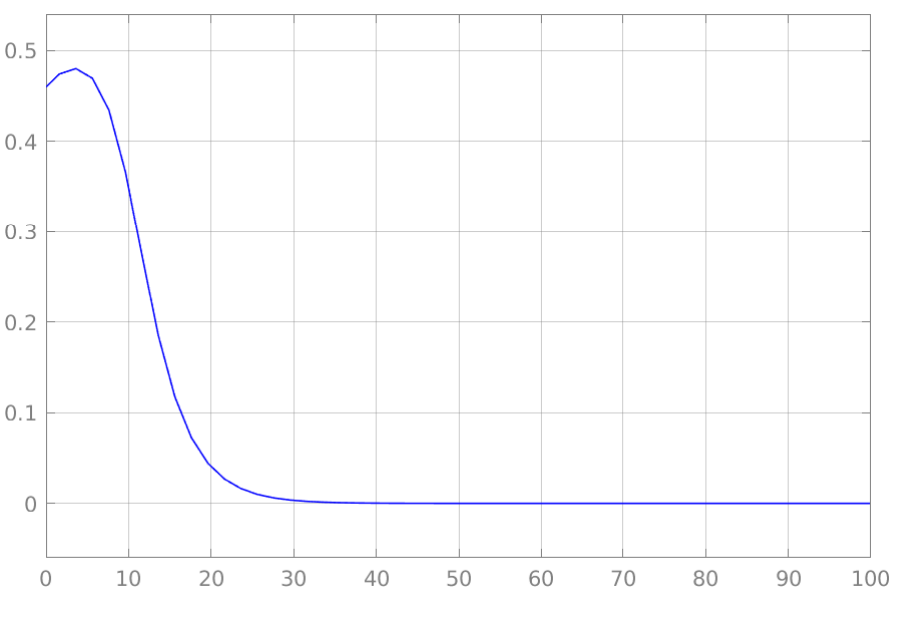}
\caption{$x_1(t|x^0)$ with $x_1^0 = 0.46$, $x_2^0 = x_3^0 = 0.25$.}
\end{center}
\label{fig:stable}
\end{figure}

\begin{figure}
\begin{center}
\hspace*{2.5em}
\includegraphics[trim={0 0 11cm 0},clip,scale=0.5]{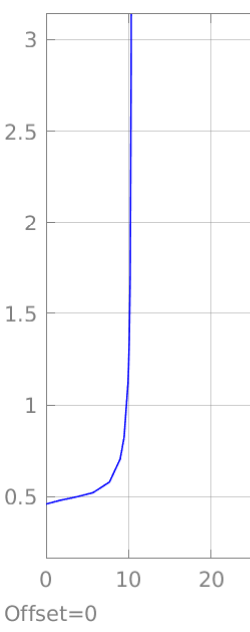}
\hspace*{2.5em}
\includegraphics[trim={0 0 11cm 0},clip,scale=0.5]{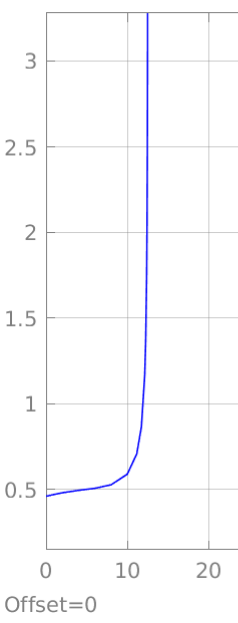}
\hspace*{2.5em}
\caption{$x_1(t|x^0)$ with $x_1^0 = 0.46$, $x_2^0 = 0.26$, $x_3^0 = 0.25$ (left) and $x_1^0 = 0.46$, $x_2^0 = 0.25$, $x_3^0 = 0.3$ (right).}
\label{fig:unstable}
\end{center}
\end{figure}

However, depending on {both} $x_2^0$ {and} $x_3^0$, the trajectory of $x_1(t|x^0)$ is either stable (with quick convergence to $0$) or unstable (with finite time explosion).

This example highlights the non-sparsity of the infinite time ROA. The same observation carries over for any finite time ROA (say for $T=100$, $X^T=[-0.1,0.1]^3$) which is very close to the infinite time ROA.

From this we can deduce that exploiting sparsity prevents us to ensure the convergence of our ROA estimations towards the actual ROA, the former being sparsely defined while the latter is not. However, we can still obtain good outer approximations of the ROA using this technique. The advantages that one gains while giving up convergence are twofold :
\begin{itemize}
\item The computational time is drastically reduced for systems that were tractable using the converging dense framework.
\item This framework allows to handle systems that are intractable with the standard dense framework, as shown experimentally below.
\end{itemize}

\section{Numerics} \label{sec:num}

We tested our formulation \eqref{lp:dsparse} on two numerical examples: the first one is the example that we mentioned in section \ref{sec:nosparse}, and the second one is a dimension $20$ chain constituted by $10$ interconnected Van der Pol oscillators.

\subsection{Reducing computational time: a toy example}

To check that our sparse method is relevant, we implemented it on system \eqref{ode:bicyl} and compared its performances to those of the dense formulation of \cite{HK14}, with SOS polynomials of degrees $8$ (Figure \ref{fig:bicyld8}) and $10$ (Figure \ref{fig:bicyld10}), with state constraint set $X = [-1,1]^3$, time horizon $T = 100$ and target set $X^T = [-0.1,0.1]^3$.

On these figures we also plot the bicylinder (that should be inside the infinite time ROA and the finite time ROA $X^0$ for $T$ large enough and $X^T$ small enough).

We gathered the computational times in Table \ref{tbl:cpu}.

\begin{table}[h]
\begin{center}
\begin{tabular}{|c|c|c|}
\hline
degree & dense & sparse \\
\hline
$4$ & $4$ & $4$ \\
\hline
$6$ & $24$ & $10$ \\
\hline
$8$ & $334$ & $83$ \\
\hline
$10$ & $5542$ & $440$\\
\hline
$12$ & - & $1865$\\
\hline
\end{tabular}
\caption{Computation times (in seconds) for the dense and sparse formulations.}
\label{tbl:cpu}
\end{center}
\end{table}

The first thing one can note is the important gain in computational time: our sparse formulation is by far less costly than the standard dense formulation, and the gap increases with the degree (at degree $10$ the sparse formulation is more than $10$ times faster).

\begin{figure}[h]
\begin{center}
\includegraphics[scale=0.3]{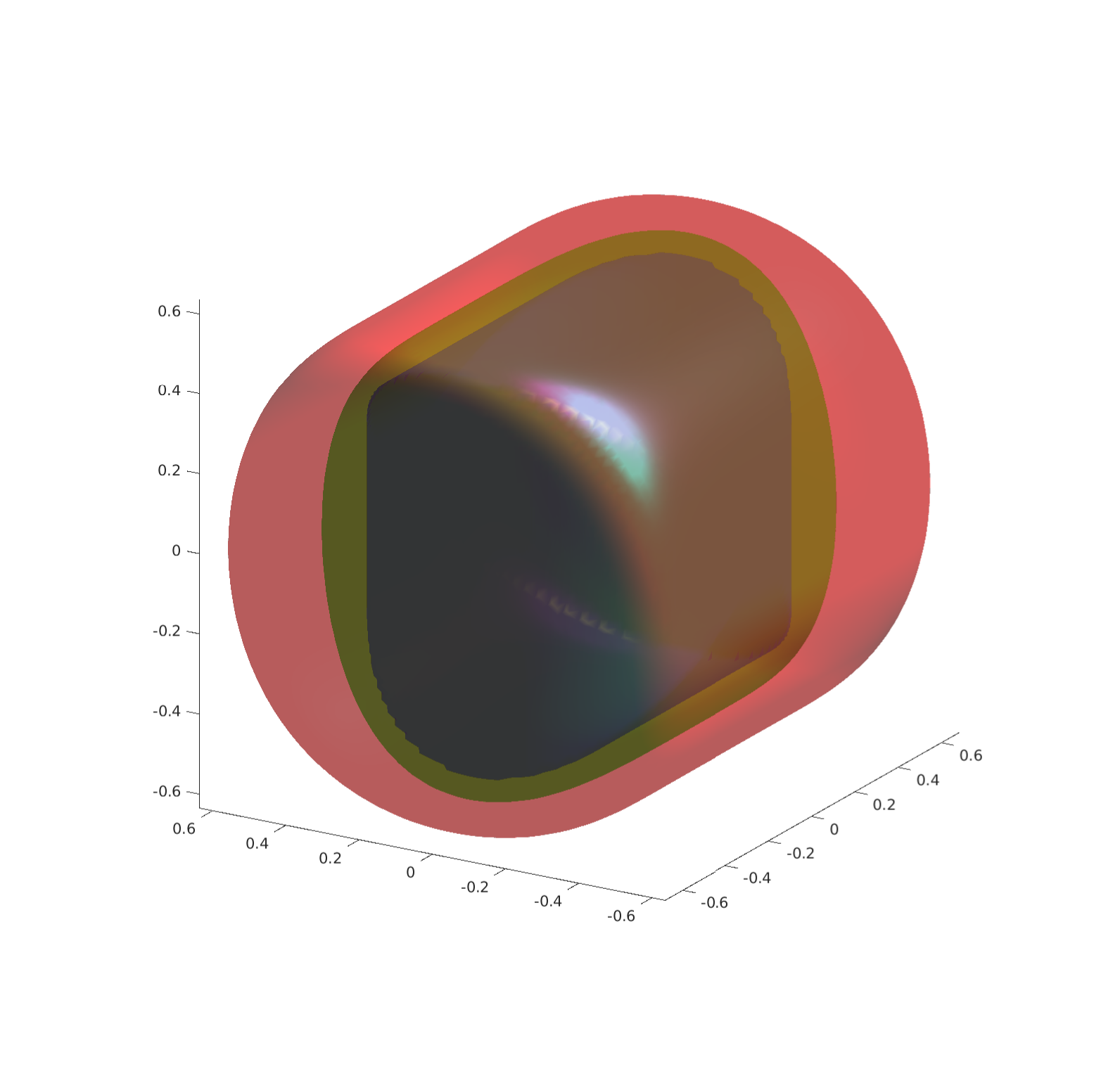}
\caption{Degree 8 sparse (red) and dense (green) ROA approximations and the bicylinder (brown).}
\label{fig:bicyld8}
\end{center}
\end{figure}

\begin{figure}
\begin{center}
\includegraphics[scale=0.3]{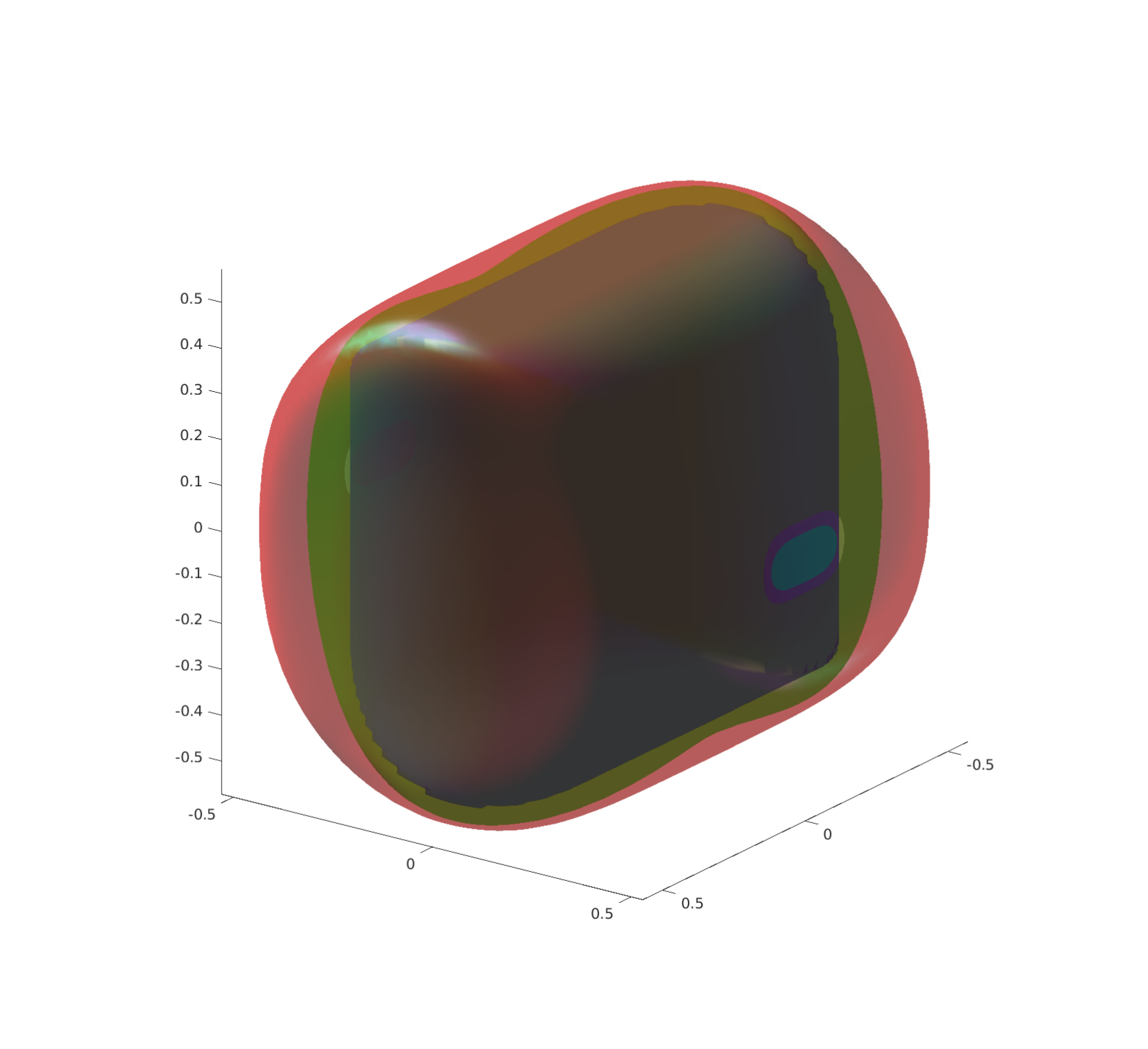}
\caption{Degree 10 sparse (red) and dense (green) ROA approximations and the bicylinder (brown).}
\label{fig:bicyld10}
\end{center}
\end{figure}

Second, one can see that while at degree $8$ the sparse approximation is less tight than the dense one, at degree $10$ the dense approximation outperforms the sparse one around $x=0$ (blue-green spot on the side of the surface), and more generally both approximations are close one to another.

\subsection{High dimension: a chain of Van der Pol oscillators}

To test our method on large scale systems, we take the same example as in \cite{KA15}, but adapted to our first sparsity pattern: we consider a chain of Van der Pol oscillators linked with random couplings. The general framework is as follows:
\begin{subequations} \label{ode:vdp}
\begin{eqnarray}
\dot{y}_i &=& -2z_i \\
\dot{z}_j &=& 0.8 y_j + 10 (1.2^2y_j^2-0.21) z_j + \epsilon_j z_{j+1} y_j \\
\dot{z}_K &=& 0.8 y_K + 10 (1.2^2y_N^2-0.21) z_K
\end{eqnarray}
\end{subequations}
with $i=1,\dots,K$ and $j=1,\dots,K-1$. This corresponds to our sparse polynomial ODE \eqref{eq:pode} with $n_i = 2$ and $x_i = (y_i,z_i)$ for $i=1,\dots,K-1$, $n_K = n_{K+1} = 1$ and $x_K = y_K$ and $x_{K+1} = z_K$ (thus $n = 2K$). One can notice that the sparse structure is even more specific than stated in our general framework since $f_j(x_j,x_{j+1}) = \left(\substack{g_j(z_j) \\ h_j(x_j,z_{j+1}) }\right)$ for $j=1,\dots,K-1$.

Here $\epsilon_j$ is a random variable that follows the uniform law on $[-0.5,0.5]$, modelling a weak interaction between the oscillators. For reporting our results, we let $K=10$, $X = [-1,1]^{20}$, $T = 30$ and $X^T = [-0.1,0.1]^{20}$ and we use a particular sample $\epsilon$.
We report on degree $12$ certificates, which takes approximately 23', among which 11'35" for declaring the decision variables with the YALMIP interface, 10'46" for solving the SDP problem with MOSEK and 41" for plotting the results with Matlab.

For $j = 1,\dots,K-1$ we plot the sets
$$X^0_j := \{x_j \in X_j : v_{j1}(0,x_j) + v_{j2}(0,0) \geq 0 \} $$
which correspond to $T=30$, $X^T=[-0.1,0.1]^2$ for the $j$-th Van der Pol oscillator with perturbation $\epsilon_j z_{j+1} y_j$ where $z_{j+1}$ is a trajectory from the $(j+1)$-th Van der Pol oscillator, starting in $0$ at $t=0$, see  Figure \ref{fig:cliques}.

\begin{figure}
\hspace*{-1em}\includegraphics[scale=0.22]{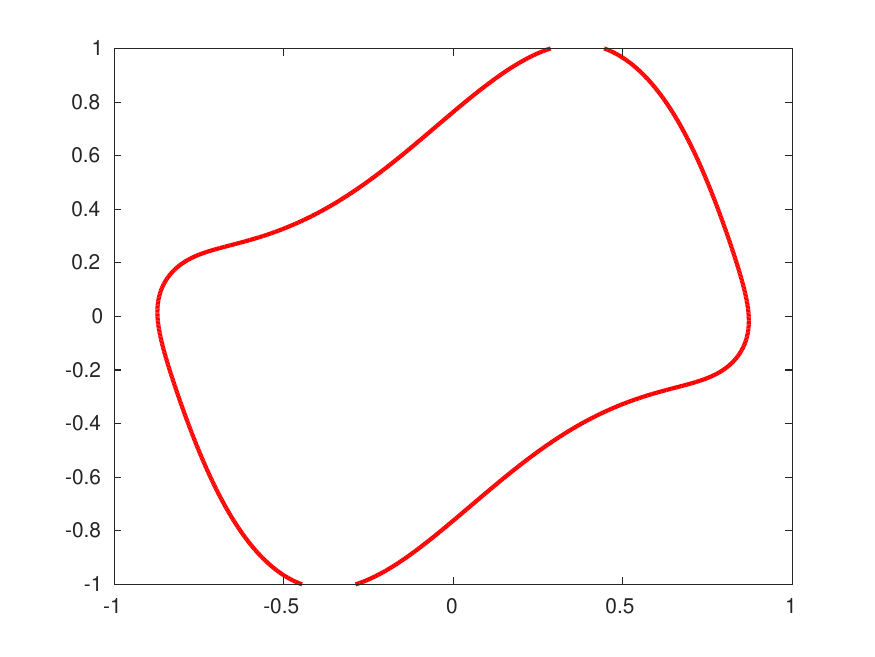}
\hspace*{-1.2em}\includegraphics[scale=0.22]{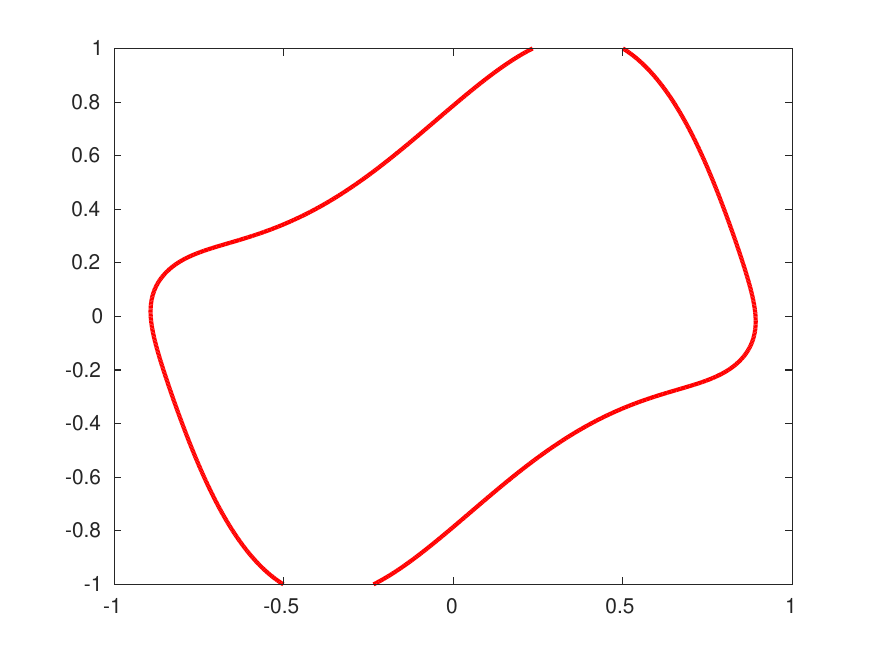}
\hspace*{-1.2em}\includegraphics[scale=0.22]{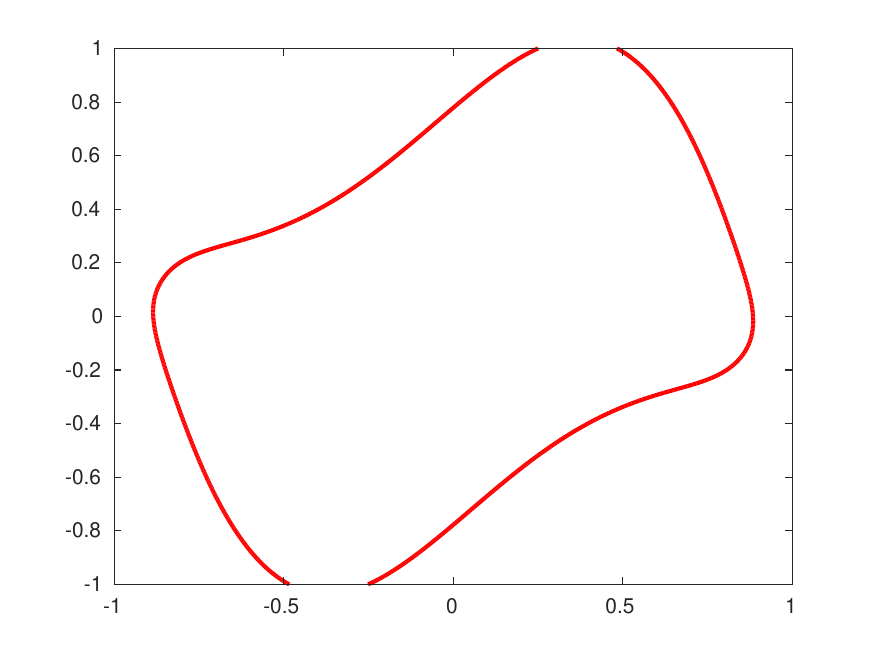}
\hspace*{-1.2em}\includegraphics[scale=0.22]{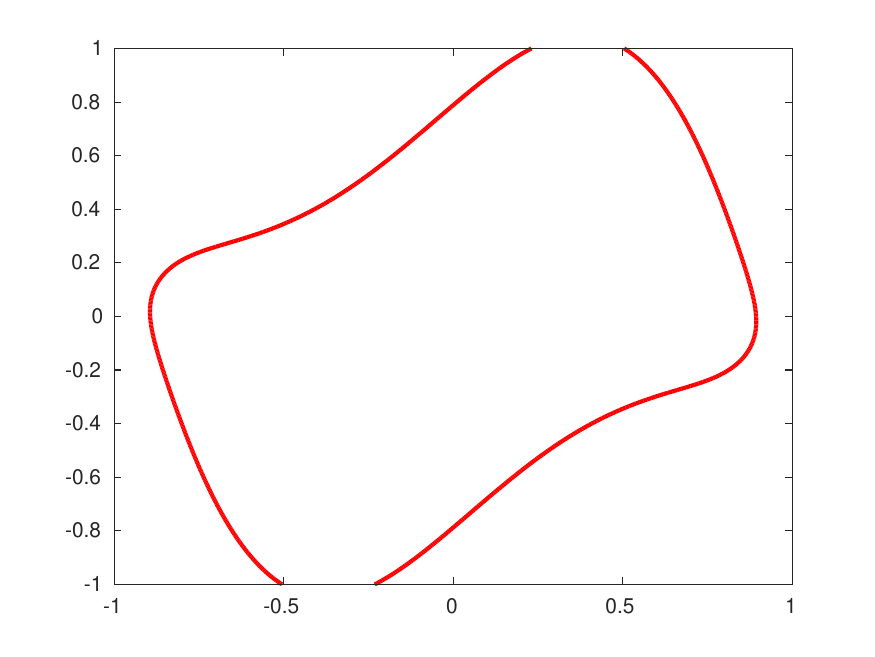}
\hspace*{-1.2em}\includegraphics[scale=0.22]{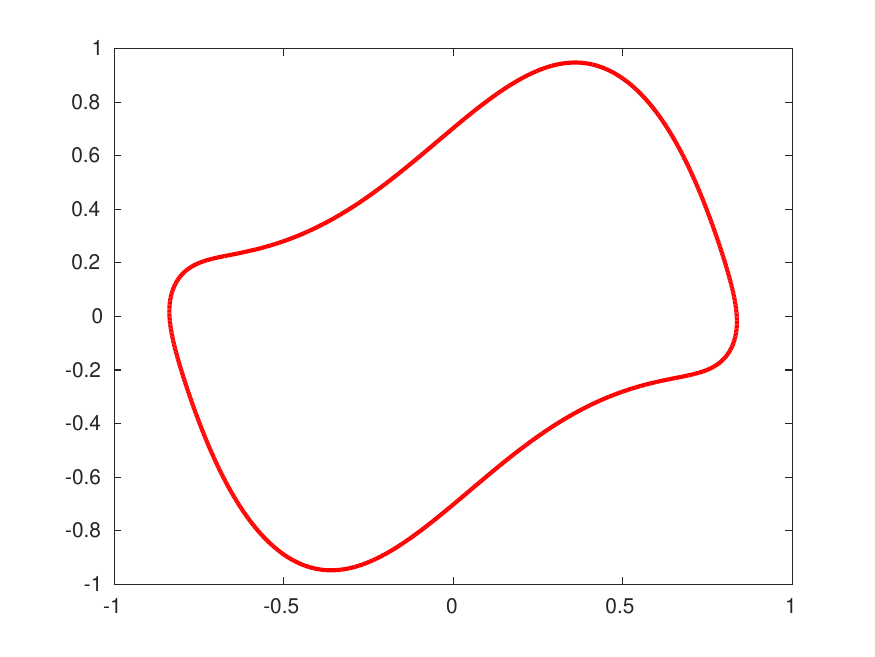}
\hspace*{-1.2em}\includegraphics[scale=0.22]{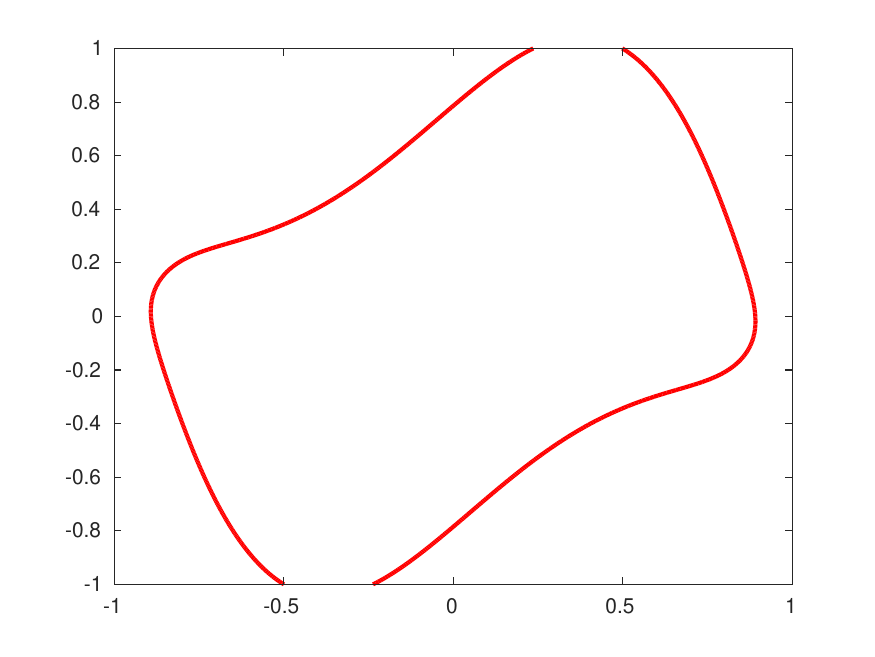}
\hspace*{-1.2em}\includegraphics[scale=0.22]{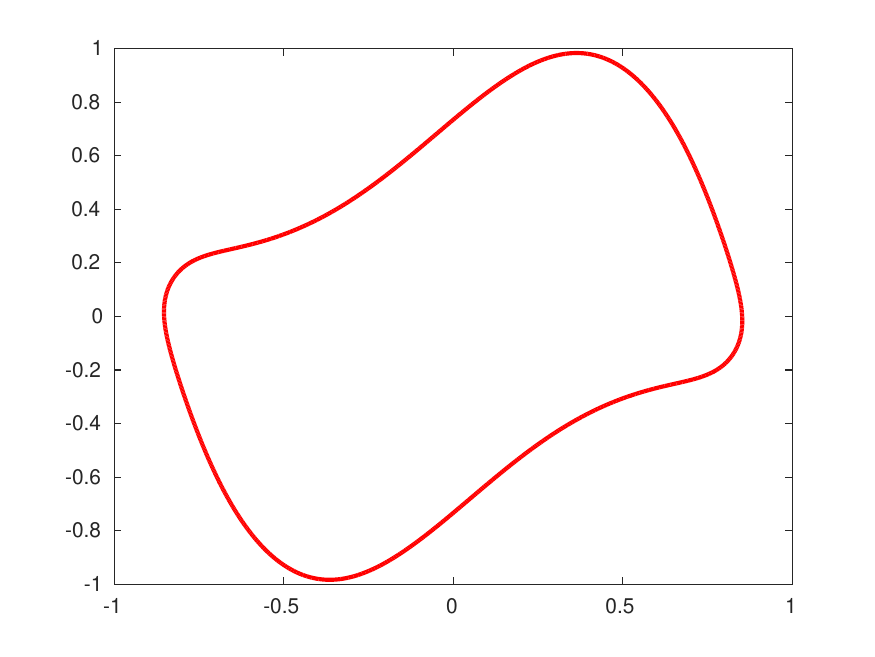}
\hspace*{-1.2em}\includegraphics[scale=0.22]{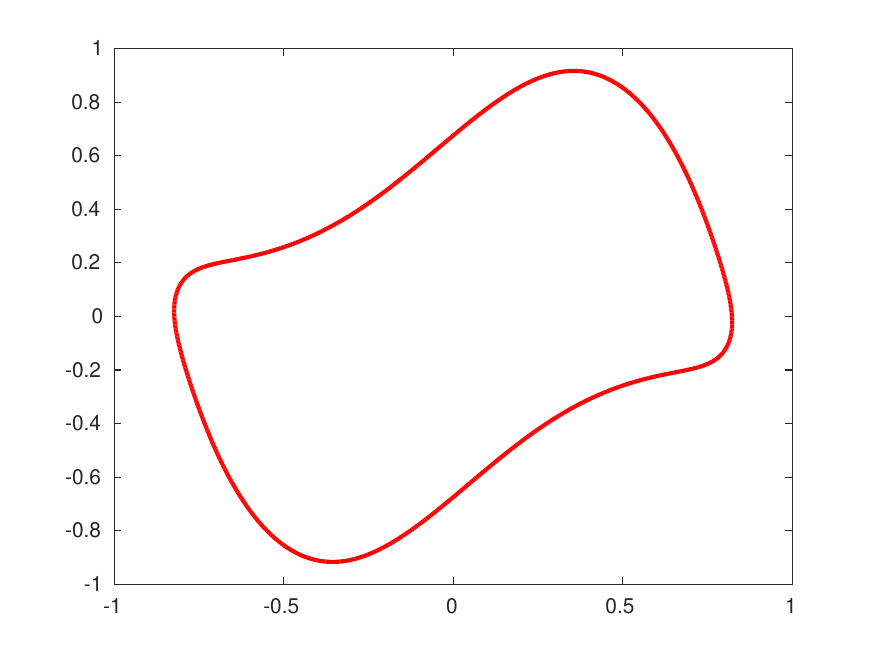}
\hspace*{-1.2em}\includegraphics[scale=0.22]{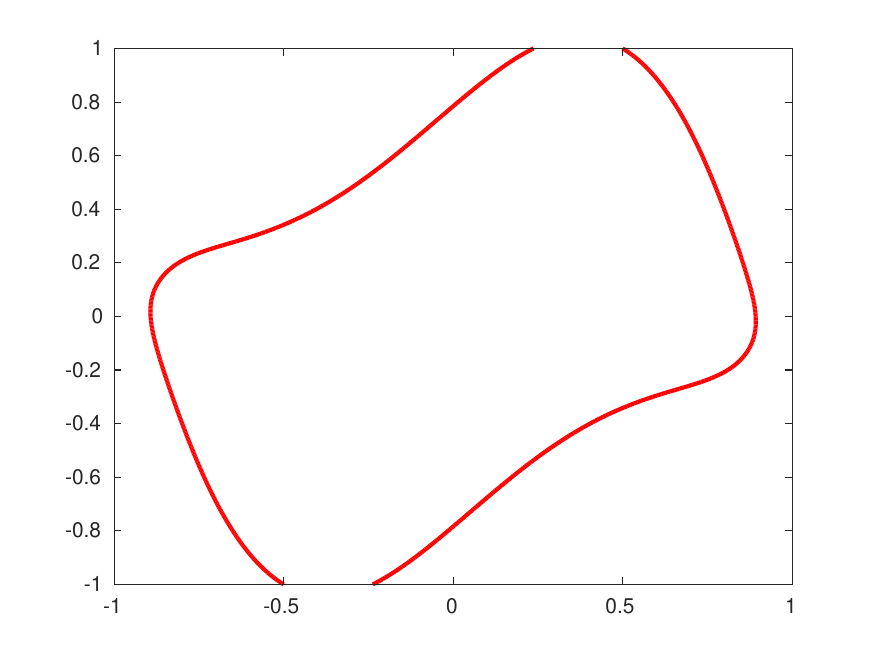}
\caption{$X^0_1$ to $X^0_9$ (from left to right and top to bottom).}
\label{fig:cliques}
\end{figure}

We also plot the set
$$ X^0_K := \{x_K \in X_K : v_K(0,x_K,x_{K+1}) \geq 0 \}$$
which corresponds to $T = 30$, $X^T=[-0.1,0.1]^2$ for the $K$-th (non perturbed) Van der Pol oscillator, see Figure \ref{fig:vdp}.

\begin{figure}
\begin{center}
\includegraphics[scale=0.3]{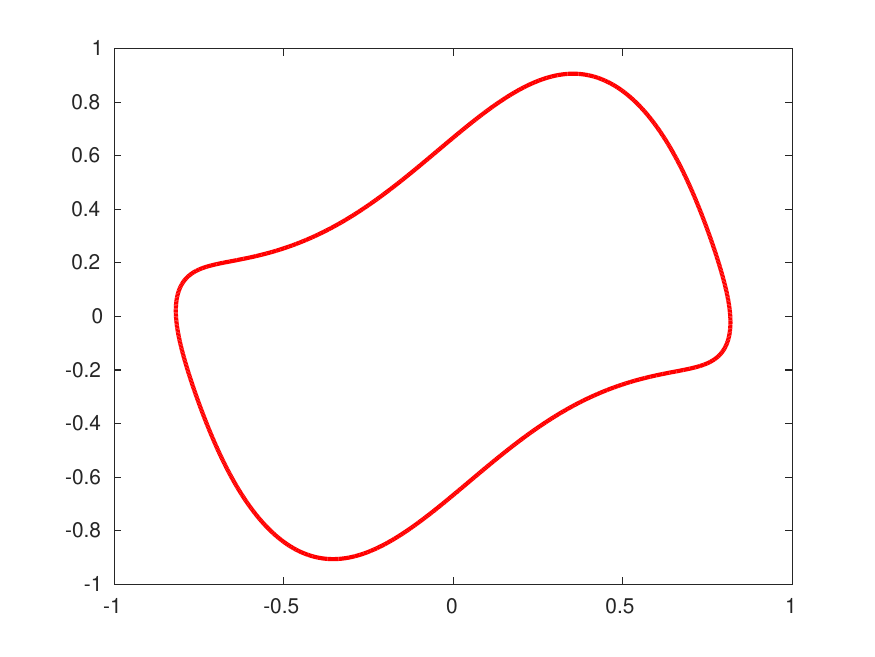}
\end{center}
\caption{$X^0_{10}$ corresponds to a regular Van der Pol oscillator.}
\label{fig:vdp}
\end{figure}

As expected considering the low magnitude of the interactions,  on Figure \ref{fig:cliques} one can identify shapes similar to the ROA of a standard Van der Pol oscillator. However, the shapes are perturbed: their respective sizes differ slightly. The standard framework of \cite{HK14} cannot be used here, due to the high dimension of the state space. It is also important to note that an important part of the computational time was spent for modelling the SDP problem, while the SDP solver was quite fast, once the decision variables were properly declared. We believe that these results are quite encouraging for future works on sparse ROA approximation.

\section{Conclusion} \label{sec:ccl}

This work is a first step towards convex computation of large scale stability regions for sparse systems. Like Lyapunov-based methods, this framework gives no convergence guarantee for the polynomial approximations when the degree tends to infinity, due to the strong sparsity constraints imposed to the SOS certificates. However, we have been able to reduce the problem of assessing stability of a large scale sparse system into a tractable convex problem. In our opinion this is a complete novelty, since previous works resulted into nonconvex bilinear problems.

This framework is valid for any chain of coupled ODEs, and it can readily be extended to other sparsity patterns, as highlighted in \cite{TWL+19}. The presentation of the results is however more complicated, which is the reason why we only presented chained ODEs in this paper.

For now, we applied it only for outer approximations of the finite time ROA, while inner approximations of the infinite time ROA and maximal positively invariant sets remain to be studied. Future work will also include the transient stability assessment of a meshed multi-machine system as in \cite{AMP13,TMA+18}, and the stability analysis of different converter grid-forming controls as in \cite{AJD18,TGA+19}.

\section*{Acknowledgements}
This work benefited from interactions with Patrick Panciatici.

\bibliography{ifacconf}             

\end{document}